\documentclass[letterpaper, 10 pt, conference]{ieeeconf}
\IEEEoverridecommandlockouts
\usepackage{cite}
\usepackage{algorithmic}
\usepackage{graphicx}
\usepackage{textcomp}
\usepackage{xcolor}
\usepackage{tikz}
\usetikzlibrary{arrows.meta,positioning,fit,calc,shapes.misc}
\usetikzlibrary{positioning,calc}
\usepackage[caption=false,font=footnotesize]{subfig}
\usepackage{mathtools}
\newtheorem{theorem}{Theorem}

\usepackage{newtxtext}
\usepackage{newtxmath}
\usepackage{microtype}

\usepackage{hyperref}

\hypersetup{ 
pdftitle={Control-Theoretic View of Neural ODEs: Empirical Controllability and Observability},   
 pdfsubject={Neural Ordinary Differential Equations}, 
 pdfkeywords={Neural Ordinary Differential Equations, controllability, observability},
 colorlinks=true,
 filecolor=green, 
 citecolor=red,       
 urlcolor=blue}
\usepackage{eso-pic}
 \def\BibTeX{{\rm B\kern-.05em{\sc i\kern-.025em b}\kern-.08em
    T\kern-.1667em\lower.7ex\hbox{E}\kern-.125emX}}
    
\title{\LARGE \bf
Control-Theoretic View of Neural ODEs: Empirical Controllability and Observability
}

\author{Md Saiful Islam and Rahul Bhadani  % <-this % stops a space
\thanks{\scriptsize The authors are with AI, Autonomy, Resilience, Control (AARC) Lab at the Department of Electrical and Computer Engineering, The University of Alabama in Huntsville, Alabama, USA. Email: {\textit{mi1499@uah.edu}, \textit{rahul.bhadani@uah.edu}}}%
}

\begin{document}
\AddToShipoutPictureFG*{%
  \AtPageLowerLeft{%
    \raisebox{0.5in}{%
      \hspace*{2.5in}%
      {\footnotesize This manuscript has been submitted to IEEE Conference on Decision and Control 2026}%
    }%
  }%
}
\maketitle
\thispagestyle{empty}
\pagestyle{empty}

%\begin{abstract}
%This paper studies neural ordinary differential equations (neural ODEs) from a control-theoretic perspective using controllability and observability concepts. The neural ODE is modeled in a control-affine form to enable analysis using tools from nonlinear and linear time-varying (LTV) systems. Controllability is analyzed through trajectory linearization, where the LTV controllability Gramian provides a first-order, local measure of state influence along a nominal trajectory. Observability is studied using output linearization, where the LTV observability Gramian characterizes local state reconstructability from output measurements along a nominal trajectory. Koopman-based lifting is discussed as a framework to extend the analysis to a higher-dimensional linear representation, with limitations identified under multiple equilibria and basin-dependent behavior. The proposed framework is demonstrated on a series RLC circuit, where the learned neural ODE maintains nearly accurate trajectory reconstruction under Gaussian measurement noise and generalizes to unseen initial conditions. Empirical Gramian analysis confirms positive eigenvalues consistent with local controllability and observability of the learned dynamics. Distinct initial conditions produce distinguishable output trajectories, supporting the Gramian-based observability interpretation.
%\end{abstract}

\begin{abstract}
This paper studies neural ordinary differential equations (neural ODEs) from a control-theoretic perspective using controllability and observability concepts. The neural ODE is represented in a control-affine form to facilitate analysis using tools from nonlinear and linear time-varying (LTV) systems. Controllability is examined through trajectory linearization, where the LTV controllability Gramian provides a local, first-order measure of input influence along a nominal trajectory. Observability is analyzed through output linearization, where the LTV observability Gramian characterizes the local ability to reconstruct system states from output measurements. Koopman-based lifting is considered to extend the analysis to a higher-dimensional representation, and its limitations under multiple equilibria and basin-dependent behavior are discussed. The proposed framework is illustrated on a series RLC circuit. The learned neural ODE reproduces system trajectories and generalizes to unseen initial conditions. The computed Gramians are numerically full rank along the tested trajectories, indicating local controllability and observability of the linearized dynamics. 
\end{abstract}

%\begin{IEEEkeywords}
 %   Neural ordinary differential equations, controllability, observability, Koopman lifting.
%\end{IEEEkeywords}

\section{Introduction}

Neural ordinary differential equations (neural ODEs) provide a framework for learning continuous-time system dynamics from data. When neural networks are modeled as differential equations, they represent network depth as continuous time, enabling flexible architectures, reduced memory usage, and effective modeling of nonlinear dynamical systems \cite{chen2018neural, yang2025augmented}. Despite these advantages, neural ODEs are primarily studied from a learning perspective and are often treated as black-box models in control and estimation. As a result, basic system-theoretic properties such as controllability and observability are not explicitly analyzed. This lack of analysis limits their interpretability and reliability, particularly in safety-critical applications where guarantees on state reachability and reconstruction are important \cite{niu2024applications, rahman2022neural, martinelli2023unconstrained}.

Existing work on neural ODEs focuses on model expressiveness, numerical stability, and efficient training methods. This includes techniques like adjoint-based gradient computation and solver-aware training approaches \cite{chen2018neural, gholami2019anode}. At the intersection of learning and control, prior work has explored neural state-space models and data-driven system identification \cite{zhang2026neural}. However, those approaches consider controllability and observability implicitly rather than analyzing these properties directly using system-theoretic tools. Whereas classical control theory provides well-established criteria for controllability and observability based on linearization and Gramian analysis for nonlinear and linear time-varying systems \cite{verriest2025comparison, cheng2025control}. However, these tools are not systematically applied to neural ODE models.

Koopman operator theory offers a linear representation of nonlinear dynamical systems by lifting the state into a higher-dimensional space of observables, where the dynamics evolve linearly. Koopman theory provides a powerful framework for system analysis, model reduction, and control of nonlinear systems \cite{shi2026koopman}. While Koopman theory has been used to study controllability and observability in general nonlinear systems~\cite{liu2025identification}, its application to neural ODE architectures, especially for analyzing classical system-theoretic properties, remains less explored.

Building on these observations, we adopt a control-theoretic perspective on neural ODEs. We model neural ODEs as nonlinear control systems with explicit inputs and analyze their dynamical properties using trajectory linearization and system-theoretic tools as illustrated in Figure~\ref{fig:1}. In particular, we analyze local controllability and observability using linear time-varying Gramians, and extend the analysis via Koopman lifting under suitable assumptions.

%%%%%%%%%%%%%%%%%%%%%%%%%%%%%%%%%%%%%%%%%%%%%%%%%%%%%%%%%%%%%%
%---------------- Figure ----------------%
\begin{figure}[t!]
\centering
\resizebox{1.0\linewidth}{!}{
\begin{tikzpicture}[
    node distance=1.35cm and 1.55cm,
    >=Latex,
    every node/.style={font=\large},
    block/.style={
        draw,
        rounded corners=3pt,
        thick,
        align=center,
        minimum height=1.05cm,
        minimum width=2.8cm
    },
    io/.style={
        block,
        fill=blue!12,
        draw=blue!65!black
    },
    dyn/.style={
        block,
        fill=orange!18,
        draw=orange!75!black,
        minimum width=3.4cm
    },
    solver/.style={
        block,
        fill=green!18,
        draw=green!55!black,
        minimum width=3.0cm
    },
    output/.style={
        block,
        fill=cyan!15,
        draw=cyan!60!black,
        minimum width=2.6cm
    },
    loss/.style={
        block,
        fill=red!12,
        draw=red!70!black,
        minimum width=3.0cm
    },
    aux/.style={
        block,
        fill=violet!12,
        draw=violet!60!black,
        minimum width=2.5cm
    },
    arrow/.style={->, thick},
    dashedarrow/.style={->, thick, dashed, gray!80},
    backarrow/.style={->, very thick, red!75!black}
]

% Main forward path
\node[io] (x0) {$x(t_0)$\\Initial state};

\node[dyn, right=of x0] (dynamics)
{Learned dynamics\\[2pt]
$\dot{x}(t)=f_\theta(x(t),u(t),t)$};

\node[solver, right=of dynamics] (odesolver)
{ODE solver\\[2pt]
Integrate from $t_0$ to $t_1$};

\node[io, right=of odesolver] (xt)
{$x(t_1)$\\State trajectory};

\node[output, below=1.25cm of xt] (yblock)
{Output map\\[2pt]
$y(t_1)=h(x(t_1))$};

\node[loss, left=1.83cm of yblock] (lossblock)
{Loss\\[2pt]
$\mathcal{L}\!\left(y(t_1),y_{\mathrm{ref}}(t_1)\right)$};

\node[aux, left=3.25cm of lossblock, xshift=0cm] (yref)
{$y_{\mathrm{ref}}(t)$\\Reference output};

% Inputs
\node[aux, above=1.25cm of dynamics] (u)
{$u(t)$\\Control input};

\node[aux, above=1.25cm of odesolver] (time)
{$[t_0,t_1]$\\Time span};

% Optional measured/output trajectory
%\node[aux, below=2.5cm of odesolver, xshift=1.7cm] (traj)
%{$x(t),\,y(t)$\\Generated trajectory};

% Forward arrows
\draw[arrow] (x0) -- (dynamics);
\draw[arrow] (dynamics) -- node[above] {} (odesolver);
\draw[arrow] (odesolver) -- (xt);
\draw[arrow] (xt) -- (yblock);
\draw[arrow] (yblock) -- (lossblock);
\draw[arrow] (yref) -- (lossblock);
\draw[arrow] (u) -- (dynamics);
\draw[arrow] (time) -- (odesolver);

% Trajectory annotation
%\draw[dashedarrow] (odesolver) -- (traj);
%\draw[dashedarrow] (traj.west) -| ([yshift=-5pt]dynamics.south);

% Backward adjoint flow
\draw[backarrow] (lossblock) -- 
node[pos=0.25, above, xshift=50pt, text=red!75!black] {Backpropagation} 
(odesolver);

%\draw[backarrow] ([yshift=7pt]odesolver.south) -- ([yshift=7pt]dynamics.north);

\end{tikzpicture}
}
\caption{
Control-oriented neural ODE training architecture. The learned dynamics are modeled as
$\dot{x}(t)=f_\theta(x(t),u(t),t)$, where the neural network parameterizes the state equation and the control input $u(t)$ acts explicitly on the dynamics. An ODE solver integrates the system from the initial condition $x(t_0)$ over the interval $[t_0,t_1]$ to generate the state trajectory $x(t_1)$, which is mapped to the output $y(t_1)=h(x(t_1))$. Training is performed by minimizing a loss $\mathcal{L}(y(t_1),y_{\mathrm{ref}}(t_1))$, with gradients computed via automatic differentiation through the ODE solver.
%\rkb{ODE Solver is a part of Neural ODE, so an accurate block diagram should show a Neural Network block, and an initial condition as input to the solver. Essentially, ODESolver is a part of the Neural Network's forward function. This needs to be redone.}
}
\label{fig:1}
\end{figure}
%%%%%%%%%%%%%%%%%%%%%%%%%%%%%%%%%%%%%%%%%%%%%%%%%%%%%%%%%%%%%%

The results in this paper are inherently local and conditional. Gramian-based rank conditions are exact for the linearized dynamics but provide only local indicators of accessibility and state reconstruction for the underlying nonlinear system. Similarly, Koopman-based analysis depends on assumptions such as injective lifting and finite-dimensional closure, and can not hold in the presence of multiple equilibria or basin-dependent behavior.
This work does not propose a new learning algorithm or controller. Instead, it provides a theoretical framework for interpreting learned neural ODE models through the lens of nonlinear systems theory, clarifying when such models admit meaningful control and estimation interpretations.

\subsection{Contributions}

This paper provides a control-theoretic interpretation of neural ODEs. The main contributions are:\\
\textbf{(1) Controlled neural ODEs:} Neural ODEs are modeled as nonlinear control systems with explicit inputs in a control-affine form, allowing the use of nonlinear and linear time-varying analysis tools.\\
\textbf{(2) Local controllability and observability:} Trajectory linearization is used to construct linear time-varying controllability and observability Gramians, which provide local measures of input influence and state reconstruction.\\
\textbf{(3) Koopman-based analysis:} Koopman lifting is used to extend the analysis beyond local linearization, while noting limitations such as multi-equilibria and basin-dependent behavior.\\
\textbf{(4) Empirical validation:} The framework is evaluated on a series RLC circuit. The results are consistent with local controllability and observability, and the model shows consistent behavior across tested initial conditions.

%k-------------------------------------------------------------
\section{Neural Ordinary Differential Equations}
%k--------------------------------------------------------------
%\subsection{From Residual Networks to Continuous Depth}

Neural ODEs generalize residual networks (ResNets) by treating network depth as continuous time, making the model behave like a dynamical system, presented in Figure \ref{fig:1}. ResNets evolve the state in discrete steps (layer-by-layer), whereas neural ODEs evolve the state continuously over time ~\cite{chen2018neural}. ResNets update each layer by adding a small Forward-Euler step as:
\begin{equation}
x_{k+1} = x_k + h f(x_k,\theta_k)
\end{equation}
As the step size $h \to 0$, the discrete-depth network approaches a continuous-depth model, where the hidden state evolves according to a neural ODE \cite{chen2018neural}:
\begin{equation}\label{eq:node_base}
\dot{x}(t) = f_\theta\!\big(x(t),u(t),t\big), \qquad x(t_0)=x_0.
\end{equation}
For control-theoretic analysis, we parameterize the controlled neural ODE in a control-affine form:
\begin{equation}\label{eq:control_affine}
\dot{x}(t) =
f_\theta\!\big(x(t),t\big) +
G_\theta\!\big(x(t),t\big)\, u(t)
\end{equation}
where $x(t)\in\mathbb{R}^n$ is the state and $u(t)\in\mathbb{R}^m$ is the control input. $f_\theta:\mathbb{R}^n\times\mathbb{R}\to\mathbb{R}^n$ and
$G_\theta:\mathbb{R}^n\times\mathbb{R}\to\mathbb{R}^{n\times m}$ are represented by neural networks.
Equation~\eqref{eq:control_affine} reflects a modeling assumption rather than a general property of arbitrary $f_\theta(x,u,t)$.
Equivalently, $G_\theta=[g_1,\dots,g_m]$ with $g_i\in\mathbb{R}^n$ provides:
\begin{equation}\label{eq:control_affine_columns}
\dot{x}(t)=f_\theta\!\big(x(t),t\big)+\sum_{i=1}^{m} g_i\!\big(x(t),t,\theta\big)\,u_i(t)
\end{equation}
The network output is obtained by integrating the continuous dynamics from the initial time $t_0$ to the terminal time $t_1$ as:
\begin{equation}
x(t_1) = \mathrm{ODESolve}\!\left(x(t_0), f_\theta, t_0, t_1 \right)
\end{equation}

\section{Differential Controllability for Neural ODEs}
%%%%%%%%%%%%%%%%%%%%%%%%%%%%%%%%%%%%%%%%%%%%%%%%%%%%%%%%%%%%%%%%%%%%%%%%%%%%

To study controllability, neural ODEs are augmented with explicit control inputs $u_i$, enabling the use of nonlinear and linear control tools.

\subsection{Differential Controllability via Trajectory Linearization}

%\rkb{nominal trajectory = ideal trajectory?}

Consider a nominal trajectory $(x^*(t),u^*(t))$ indicating a particular solution of the nonlinear system under a chosen input, which serves as the operating point about which dynamics are locally linearized. Linearizing Equation \eqref{eq:node_base} about the nominal trajectory yields the linear time-varying (LTV) system as:

\begin{equation}\label{eq:lti_lin_ctrl}
\delta \dot{x}(t) = A(t)\,\delta x(t) + B(t)\,\delta u(t)
\end{equation}
where
$A(t) = \left.\frac{\partial f}{\partial x}\right|_{x^*,u^*,t}
\quad \text{and, } \quad
B(t) = \left.\frac{\partial f}{\partial u}\right|_{x^*,u^*,t}$.
The derivative of the state-transition matrix $\Phi(t,\tau)$ is:
\begin{equation}
\frac{d}{dt}\Phi(t,\tau) = A(t)\Phi(t,\tau), \qquad \Phi(\tau,\tau)=I
\end{equation}
Using the controllability Gramian, we evaluate whether the system can be driven to a desired state by the input~\cite{kalman1963mathematical}. The controllability Gramian over $[t_0,t]$ is:
\begin{equation}\label{eq:Wc}
W_c(t_0,t) =
\int_{t_0}^{t}
\Phi(t,\tau)\, B(\tau)B(\tau)^\top\, \Phi(t,\tau)^\top \, d\tau 
\end{equation}
\textbf{Assumption 1:}
$f(x,u,t,\theta)$ is continuously differentiable in $(x,u)$ $\forall\, t$, and $A(t),B(t)$ are bounded and piecewise continuous on $[t_0,t]$.

In addition, it is assumed that $A(\cdot), B(\cdot)$ are continuous or piecewise continuous so that $\Phi$ exists and the standard LTV Gramian rank test applies.

\begin{theorem}[LTV Controllability Gramian Rank Test]\label{thm:lvt_ctrl}
Under Assumption~1, the LTV system in Equation~\eqref{eq:lti_lin_ctrl}
is controllable on $[t_0,t_1]$ if and only if the controllability Gramian in Equation~\eqref{eq:Wc}
satisfies $\mathrm{rank}\big(W_c(t_0,t_1)\big)=n$.
\end{theorem}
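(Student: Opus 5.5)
We prove the two implications separately, working from the variation-of-constants formula for the LTV system in Equation~\eqref{eq:lti_lin_ctrl}. Under Assumption~1, $A(\cdot)$ and $B(\cdot)$ are bounded and piecewise continuous, so the state-transition matrix $\Phi(t,\tau)$ exists and is invertible. Every solution therefore satisfies
\begin{equation*}
\delta x(t_1)=\Phi(t_1,t_0)\,\delta x(t_0)+\int_{t_0}^{t_1}\Phi(t_1,\tau)B(\tau)\,\delta u(\tau)\,d\tau .
\end{equation*}
By controllability on $[t_0,t_1]$ we mean that for every pair $\delta x_0,\delta x_1\in\mathbb{R}^n$ there is a piecewise continuous (or $L^2$) input steering $\delta x_0$ at $t_0$ to $\delta x_1$ at $t_1$. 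Since $\Phi(t_1,t_0)$ is invertible, this is equivalent to surjectivity of the reachability map $\mathcal{L}:\delta u\mapsto\int_{t_0}^{t_1}\Phi(t_1,\tau)B(\tau)\delta u(\tau)\,d\tau$. Note that $W_c(t_0,t_1)=\mathcal{L}\mathcal{L}^*$, where $\mathcal{L}^*$ is the $L^2$ adjoint.

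\textbf{Sufficiency.} Assume $\mathrm{rank}\,W_c(t_0,t_1)=n$. Then $W_c$ is symmetric positive definite and hence invertible. Take the explicit control
\begin{equation*}
\delta u(\tau)=B(\tau)^\top\Phi(t_1,\tau)^\top W_c(t_0,t_1)^{-1}\big(\delta x_1-\Phi(t_1,t_0)\delta x_0\big).
\end{equation*}
This input is piecewise continuous and bounded by Assumption~1. Substituting it into the variation-of-constants formula, the integral collapses to $W_cW_c^{-1}(\cdot)$, which gives $\delta x(t_1)=\delta x_1$. This direction is pure computation.

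\textbf{Necessity.} Argue by contraposition. Suppose $\mathrm{rank}\,W_c<n$, and pick $v\neq 0$ with $W_cv=0$. Then
\begin{equation*}
0=v^\top W_c v=\int_{t_0}^{t_1}\|B(\tau)^\top\Phi(t_1,\tau)^\top v\|^2\,d\tau .
\end{equation*}
The integrand is nonnegative and piecewise continuous. It therefore vanishes except at finitely many points, and in any case almost everywhere. Consequently, for every admissible input,
\begin{equation*}
v^\top\big(\delta x(t_1)-\Phi(t_1,t_0)\delta x_0\big)=\int_{t_0}^{t_1}v^\top\Phi(t_1,\tau)B(\tau)\delta u(\tau)\,d\tau=0 .
\end{equation*}
Choosing $\delta x_0=0$ and $\delta x_1=v$ gives a target that cannot be reached, since $v^\top v\neq 0$. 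Hence the system is not controllable.

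The main obstacle is not computational. It lies in the regularity bookkeeping of the necessity step: concluding that the integrand vanishes almost everywhere from the vanishing of the integral, and checking that this annihilates every integral against an admissible input. Piecewise continuity from Assumption~1 handles both, because a nonnegative piecewise continuous function with zero integral vanishes away from its finitely many discontinuities, and those points have measure zero.

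One more check is needed: that the rank condition does not depend on whether the Gramian is written with $\Phi(t_1,\tau)$ or $\Phi(t_0,\tau)$. This follows from the congruence $\Phi(t_1,t_0)W(t_0,t_1)\Phi(t_1,t_0)^\top$, where $W(t_0,t_1)$ is the $\Phi(t_0,\tau)$-version of the Gramian. Since $\Phi(t_1,t_0)$ is invertible, the two versions have the same rank.
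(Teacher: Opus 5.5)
Your proof is correct and follows the same route as the paper, which also introduces the reachability operator (your $\mathcal{L}$, the paper's $\mathcal{R}$) and rests on the identity $W_c(t_0,t_1)=\mathcal{R}\mathcal{R}^\ast$. The paper simply asserts that $\mathcal{R}$ is onto if and only if $\mathcal{R}\mathcal{R}^\ast$ is nonsingular. You instead prove both directions: the input $\mathcal{R}^\ast W_c^{-1}(\cdot)$ serves as a right inverse, and the null-vector argument gives $\ker W_c=\ker\mathcal{R}^\ast$. You also treat nonzero initial states through the invertibility of $\Phi(t_1,t_0)$, which the paper leaves implicit.
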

\begin{proof}
Let $L^2([t_0,t_1];\mathbb{R}^m)$ be the space of square-integrable input
perturbations. The reachability operator
$\mathcal{R}:L^2([t_0,t_1];\mathbb{R}^m)\rightarrow\mathbb{R}^n$ can be defined as:
\begin{equation}
\mathcal{R}(\delta u)=\int_{t_0}^{t_1}\Phi(t_1,\tau)B(\tau)\delta u(\tau)\,d\tau 
\end{equation}
The reachable set of the system at time $t_1$ equals $\mathrm{Range}(\mathcal{R})$.
Hence, the system is controllable if and only if
$\mathrm{Range}(\mathcal{R})=\mathbb{R}^n$.
At this stage, we compute the adjoint operator $\mathcal{R}^\ast$.
For any $v\in\mathbb{R}^n$ and $\delta u(\cdot)\in L^2$:
\begin{align}
\langle \mathcal{R}(\delta u),v\rangle_{\mathbb{R}^n}
&=v^\top \int_{t_0}^{t_1}\Phi(t_1,\tau)B(\tau)\delta u(\tau)\,d\tau \nonumber \\
&=\int_{t_0}^{t_1}\delta u(\tau)^\top B(\tau)^\top
\Phi(t_1,\tau)^\top v\,d\tau 
\end{align}
Therefore,
$(\mathcal{R}^\ast v)(\tau)=B(\tau)^\top\Phi(t_1,\tau)^\top v,
\qquad \tau\in[t_0,t_1]$.
Applying $\mathcal{R}$ to $\mathcal{R}^\ast v$ provides:
\begin{align}
(\mathcal{R}\mathcal{R}^\ast v)
&=\int_{t_0}^{t_1}\Phi(t_1,\tau)B(\tau)(\mathcal{R}^\ast v)(\tau)\,d\tau \nonumber \\
&=\int_{t_0}^{t_1}\Phi(t_1,\tau)B(\tau)B(\tau)^\top
\Phi(t_1,\tau)^\top v\,d\tau  \nonumber \\
&=W_c(t_0,t_1)\,v
\end{align}
Hence,
$\mathcal{R}\mathcal{R}^\ast=W_c(t_0,t_1)$.
The operator $\mathcal{R}\mathcal{R}^\ast$ is positive semidefinite, and the reachable set satisfies $\mathrm{Range}(\mathcal{R})=\mathbb{R}^n$ if and only if $\mathcal{R}\mathcal{R}^\ast$ is nonsingular.
Since $\mathcal{R}\mathcal{R}^\ast=W_c(t_0,t_1)$, the system is controllable
if and only if $W_c(t_0,t_1)$ is full rank, i.e.,
\[
\mathrm{rank}\big(W_c(t_0,t_1)\big)=n.
\]
\end{proof}

\textbf{Remark 1:}
Theorem~\ref{thm:lvt_ctrl} is exact for the linearized LTV system. For the original nonlinear neural ODE, the condition $\mathrm{rank}(W_c(t_0,t))=n$ should be interpreted as a first-order local accessibility indicator along the nominal trajectory $(x^*(t),u^*(t))$, rather than a global controllability guarantee.

%For a nonlinear control-affine system $\dot x=f_0(x)+\sum_{i=1}^m g_i(x)u_i$, the full-rank linearized Gramian is an infinitesimal property along a trajectory. Stronger notions such as small-time local controllability (STLC) generally require additional nonlinear conditions \rkb{In my proposal submitted last year, I  mentioned this to explore, you can put this as the next TODO} (e.g., Lie-bracket/Lie-algebra rank conditions and their refinements). In this paper, we therefore use $W_c$ primarily as a trajectory-dependent, first-order system-theoretic metric of input influence.

%%%%%%%%%%%%%%%%%%%%%%%%%%%%%%%%%%%%%%%%%%%%%%%%%%%%%%%%%%%%%%%%%%%%%%%%%%%%
\subsection{Koopman-Based Lifting for Controlled Neural ODEs}
%%%%%%%%%%%%%%%%%%%%%%%%%%%%%%%%%%%%%%%%%%%%%%%%%%%%%%%%%%%%%%%%%%%%%%%%%%%%

The previous analysis applies only near one operating point. Koopman theory lifts the nonlinear system into a feature space where it behaves approximately linear, allowing the use of methods from linear-system analysis \cite{hirano2026conformal}.

\noindent \textbf{Assumption 2:}
A lifting map $z=\psi(x)\in\mathbb{R}^N$ transforms the state into a higher-dimensional space where the dynamics are approximately bilinear as:
$\dot z(t)=A_K z(t)+\sum_{i=1}^m B_{K,i}z(t)u_i(t)+r(t)$, where $A_K,B_{K,i}\in\mathbb{R}^{N\times N}$ are constant matrices, and $r(t)$ is a bounded residual term satisfying $\|r(t)\|\le \bar r$.

\noindent\textbf{Remark 2:}
If $f$ depends explicitly on time, Koopman representations will require time-varying matrices $A_K(t), B_{K,i}(t)$ or an augmented state $(x,t)$ with $\dot t=1$ and lifting $z=\psi(x,t)$.

\noindent \textbf{Assumption 3:}
The nonlinear controlled system with disturbance is:
$\dot{x}(t)=f(x(t),u(t),t,\theta)+\sigma(t)$,
$ x\in\mathbb{R}^n$,
in which the lumped disturbance is bounded as $\|\sigma(t)\|\le \zeta$ for some
$\zeta>0$.

According to \textit{Assumption 2}, $z=\psi(x)\in\mathbb{R}^N$ is a continuously differentiable lifting map including Jacobian $J_\psi(x)=\frac{\partial \psi}{\partial x}$. Then, the lifted dynamics can be written as:
\begin{equation}\label{eq:koop_disturbed}
\dot{z}(t)=A_K z(t)+\sum_{i=1}^m B_{K,i}\,z(t)\,u_i(t)+r(t)+\sigma_K(t)
\end{equation}
where the lifted disturbance is
$\sigma_K(t)=J_\psi(x(t))\,\sigma(t)$.
On the region of interest $\mathcal{X}$, the Jacobian is uniformly bounded, i.e., $\sup_{x\in\mathcal{X}}\|J_\psi(x)\|\le \bar{J}<\infty$, such that $\|\sigma_K(t)\|\le \bar{J}\,\zeta $.

\subsection{Controllability in Koopman Space}

By locally linearizing the bilinear Koopman model in Equation \eqref{eq:koop_disturbed} in the lifted coordinates, we can apply linear controllability tests. Around the nominal trajectory$(z^*(t),u^*(t))$ we obtain:
\begin{equation}\label{eq:koop_ltv_lin}
\delta \dot z(t)=A_K^{\mathrm{lin}}(t)\delta z(t)+B_K^{\mathrm{lin}}(t)\delta u(t)
\end{equation}
where $A_K^{\mathrm{lin}}(t)=A_K+\sum_{i=1}^m B_{K,i}u_i^*(t)$ and
$B_K^{\mathrm{lin}}(t)=[B_{K,1}z^*(t)\;\cdots\;B_{K,m}z^*(t)]$.

Now, if $\Phi_K(t,\tau)$ is the state-transition matrix of $A_K^{\mathrm{lin}}(t)$, then the Koopman controllability Gramian is as:
\begin{equation}\label{eq:WcK}
W_c^{(K)}(t_0,t)=\int_{t_0}^t
\Phi_K(t,\tau)\, B_K^{\mathrm{lin}}(\tau)\big(B_K^{\mathrm{lin}}(\tau)\big)^\top
\Phi_K(t,\tau)^\top\, d\tau 
\end{equation}

\textbf{Remark 3:}
If $\mathrm{rank}(W_c^{(K)}(t_0,t))=N$, then the linearized lifted system in Equation \eqref{eq:koop_ltv_lin} is controllable on $[t_0,t]$ in $\mathbb{R}^N$. Under a sufficiently expressive and locally injective lifting map $\psi(\cdot)$, this indicates local, trajectory-dependent approximate controllability of the original nonlinear state within the region $\mathcal{X}$.
%%%%%%%%%%%%%%%%%%%%%%%%%%%%%%%%%%%%%%%%%%%%%%%%%%%%%%%%%%%%%%%%%%%%%%%%%%%%

%%%%%%%%%%%%%%%%%%%%%%%%%%%%%%%%%%%%%%%%%%%%%%%%%%%%%%%%%%%%%%%%%%%%%%%%%%%%
\subsection{Effect of Multiple Equilibria and Basin-Dependent Properties}
%%%%%%%%%%%%%%%%%%%%%%%%%%%%%%%%%%%%%%%%%%%%%%%%%%%%%%%%%%%%%%%%%%%%%%%%%%%%

Neural ODEs are nonlinear dynamical systems whose vector field is represented by a neural network. 
For a constant input $u_e$, equilibrium points $x_i^*$ satisfying $f(x_i^*,u_e,\theta)=0$. Thus, when the input remains constant at $u_e$, the state remains unchanged if initialized at $x_i^*$. Because $f(\cdot)$ is nonlinear, multiple such equilibrium points may exist.

\textbf{Controllability:}
The Gramian analysis in the previous section is based on linearization along a nominal trajectory $(x^*(t),u^*(t))$. 
The condition $\mathrm{rank}(W_c(t_0,t))=n$ guarantees controllability of the linearized system near that trajectory. 
However, in the nonlinear neural ODE, the reachable set depends on the basin containing the initial condition. 
Moving between basins can require passing through unstable regions. Therefore, local controllability does not imply global controllability.

\textbf{Observability:}
For the measurement model: $y(t)=h(x(t),t)$, distinct states from different basins can produce identical outputs. 
Consequently, though the observability Gramian is full rank, the state can be reconstructed only locally near the nominal trajectory. 
Thus, observability is also trajectory dependent.

\textbf{Koopman lifting:}
Koopman analysis relies on a lifting $z=\psi(x)$. 
If two different states satisfy $\psi(x_a)=\psi(x_b)$, the lifted coordinates cannot uniquely represent the original state. 
Therefore, lifted observability provides state distinguishability only under an injective lifting
$\psi:\mathcal{X}\to\mathbb{R}^N$.

\textbf{Disturbance effects:}
Under disturbances $\dot{x}(t)=f(x(t),u(t),t,\theta)+\sigma(t)$, the lifted disturbance is $\sigma_K(t)=J_\psi(x(t))\sigma(t)$. 
If $\|\sigma(t)\|\le \zeta$ and the lifting Jacobian is bounded as $\|J_\psi(x)\|\le \bar J$ on the region of interest, then
$\|\sigma_K(t)\|\le \bar J\,\zeta$.
Thus, although the disturbance is bounded in the original state, its magnitude in the lifted space is state-dependent and can be amplified near unstable regions.

\noindent\textbf{Remark 4:}
With multiple equilibria, the state space splits into basins of attraction. Therefore, the controllability and observability Gramians describe behavior only near the nominal trajectory. A full-rank Gramian indicates local accessibility and local state reconstruction, not global controllability or observability.

\noindent\textbf{Remark 5:}
Koopman lifting $z=\psi(x)$ must be injective on the region of interest. Otherwise, different states can share the same lifted coordinate, and lifted observability does not provide true state observability. 
In addition, the lifted disturbance $\sigma_K(t)=J_\psi(x(t))\sigma(t)$ depends on the state, so disturbances can be amplified near unstable regions.

%Overall, Neural ODEs with multiple equilibria behave as multi-operating-point nonlinear systems. Accordingly, controllability and observability results in this paper should be interpreted as local properties along trajectories rather than global guarantees.

%%%%%%%%%%%%%%%%%%%%%%%%%%%%%%%%%%%%%%%%%%%%%%%%%%%%%%%%%%%%%%%%%%%%%%%%%%%%
\section{Observability Analysis of Neural ODEs}
%%%%%%%%%%%%%%%%%%%%%%%%%%%%%%%%%%%%%%%%%%%%%%%%%%%%%%%%%%%%%%%%%%%%%%%%%%%%

Observability determines whether the internal state of a neural ODE can be reconstructed from its outputs.

\subsection{Local Observability via Linearization}

Assume the measurement equation is:
\begin{equation}\label{eq:meas}
y(t) = h(x(t),t), \qquad y \in \mathbb{R}^p
\end{equation}
The output function can itself be a neural network, making observability dependent on learned representations.
Linearizing Equation \eqref{eq:meas} about a nominal trajectory $x^*(t)$ provides:
\begin{equation}\label{eq:lin_output}
\delta y(t) = C(t)\,\delta x(t),
\qquad
C(t) = \left.\frac{\partial h}{\partial x}\right|_{x^*,t}
\end{equation}
The observability Gramian is the ability to determine the state of the linear system via its output in finite time~\cite{kalman1963mathematical}. The observability Gramian over $[t_0,t]$ is
\begin{equation}\label{eq:Wo}
W_o(t_0,t) =
\int_{t_0}^{t}
\Phi(\tau,t_0)^\top C(\tau)^\top C(\tau)\,
\Phi(\tau,t_0)\, d\tau 
\end{equation}

\textbf{Assumption 4:} $h(x,t)$ is continuously differentiable in $x$ for all $t\in[t_0,t]$, and $C(t)$ is bounded and piecewise continuous on $[t_0,t]$.

\begin{theorem}[LTV Gramian Rank Test for Observability]\label{thm:lvt_obs}
Under Assumptions~1 and~4, the linearized system $\delta\dot x=A(t)\delta x$, $\delta y=C(t)\delta x$
is observable on $[t_0,t]$ if and only if $\mathrm{rank}(W_o(t_0,t))=n$.
\end{theorem}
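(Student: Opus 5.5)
We mirror the operator-theoretic proof of Theorem~\ref{thm:lvt_ctrl}, now using the observation map instead of the reachability map. Under Assumptions~1 and~4, $\Phi(\tau,t_0)$ exists and is continuous in $\tau$, and $C(\cdot)$ is bounded and piecewise continuous. For the unforced linearized system, the output generated by an initial perturbation $\delta x_0$ is $\delta y(\tau)=C(\tau)\Phi(\tau,t_0)\delta x_0$. We therefore define the observation operator
\[
\mathcal{O}:\mathbb{R}^n\rightarrow L^2([t_0,t];\mathbb{R}^p),\qquad (\mathcal{O}\,\delta x_0)(\tau)=C(\tau)\Phi(\tau,t_0)\,\delta x_0 .
\]
Boundedness of $C$ and $\Phi$ on the compact interval guarantees that the image lies in $L^2$. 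Observability on $[t_0,t]$ means that $\delta x_0$ is uniquely determined by $\delta y$ on $[t_0,t]$. By linearity, this is equivalent to $\ker(\mathcal{O})=\{0\}$, i.e., the only initial state producing an identically zero output is the zero state.

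Next we compute the adjoint $\mathcal{O}^\ast:L^2\rightarrow\mathbb{R}^n$. For $w\in L^2$ we have $\langle \mathcal{O}\delta x_0,w\rangle_{L^2}=\delta x_0^\top\int_{t_0}^{t}\Phi(\tau,t_0)^\top C(\tau)^\top w(\tau)\,d\tau$, so $\mathcal{O}^\ast w=\int_{t_0}^{t}\Phi(\tau,t_0)^\top C(\tau)^\top w(\tau)\,d\tau$. Composing the two operators gives $\mathcal{O}^\ast\mathcal{O}=W_o(t_0,t)$ exactly as in Equation~\eqref{eq:Wo}.

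The key identity is then
\[
v^\top W_o(t_0,t)\,v=\|\mathcal{O}v\|_{L^2}^2=\int_{t_0}^{t}\|C(\tau)\Phi(\tau,t_0)v\|^2\,d\tau .
\]
It yields both directions. If $W_o$ is singular, choose $v\neq 0$ with $W_o v=0$. Then the integrand is zero almost everywhere, so $\delta x_0=v$ and $\delta x_0=0$ produce the same output and the system is not observable. Conversely, if $W_o$ has rank $n$, it is positive definite. In that case $\delta x_0=W_o^{-1}\int_{t_0}^{t}\Phi(\tau,t_0)^\top C(\tau)^\top\delta y(\tau)\,d\tau$ reconstructs the initial state explicitly, so $\ker\mathcal{O}=\{0\}$.

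The main subtlety is the almost-everywhere versus everywhere distinction in the first direction. Piecewise continuity of $C$ together with continuity of $\Phi$ makes $\tau\mapsto C(\tau)\Phi(\tau,t_0)v$ piecewise continuous. A piecewise continuous function that vanishes almost everywhere vanishes except at finitely many points, so the two outputs are equal as signals, which is the sense of indistinguishability we need. A second point is fixing the definition of observability as reconstruction of the initial perturbation $\delta x_0$. Reconstruction of $\delta x(t)$ is equivalent, because $\Phi(t,t_0)$ is invertible.

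Finally, as in Remark~1, we note that the equivalence is exact only for the linearized system about $x^\ast(\cdot)$. For the nonlinear neural ODE it gives only a local distinguishability indicator along the nominal trajectory.
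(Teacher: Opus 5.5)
Your proof is correct, but it takes a different route from the paper. The paper proves Theorem~\ref{thm:lvt_obs} in two lines by duality: it asserts that observability of $(A(t),C(t))$ is equivalent to controllability of $\dot\eta=-A(t)^\top\eta+C(t)^\top v$ and then invokes Theorem~\ref{thm:lvt_ctrl}. You instead argue directly. You build the observation operator $\mathcal{O}$ and show $\mathcal{O}^\ast\mathcal{O}=W_o(t_0,t)$, which mirrors the paper's identity $\mathcal{R}\mathcal{R}^\ast=W_c$ rather than reducing to it.

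The paper's route is brief and reuses Theorem~\ref{thm:lvt_ctrl}, but it leaves two steps implicit. The first is the duality equivalence itself. It is normally proved by exactly your quadratic-form identity, or by the pairing $\frac{d}{d\tau}(\eta^\top\delta x)=v^\top\delta y$. The second is that the dual system's controllability Gramian is not $W_o$ itself. Since the transition matrix of $-A^\top$ from $\tau$ to $t$ is $\Phi(\tau,t)^\top$, that Gramian is the congruent matrix $\Phi(t_0,t)^\top W_o(t_0,t)\Phi(t_0,t)$, so equality of ranks relies on invertibility of $\Phi$.

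Your route is self-contained and yields $W_o$ exactly in the form of Eq.~\eqref{eq:Wo}. It also gives the explicit reconstruction $\delta x_0=W_o^{-1}\mathcal{O}^\ast\delta y$. Most usefully, it makes explicit the almost-everywhere notion of indistinguishability that is genuinely required when $C$ is only piecewise continuous. Under a pointwise notion the ``only if'' direction fails. For example, take $n=1$, $A=0$, and $C$ equal to $1$ at $\tau=t_0$ and $0$ elsewhere. Then $W_o=0$, although $\delta y(t_0)=\delta x_0$ recovers the state. The paper's duality assertion glosses over this subtlety.
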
\vspace{-10pt}
\begin{proof}
Observability of $(A(t),C(t))$ is equivalent to controllability of the dual system $\dot{\eta}(t)= -A(t)^\top \eta(t)+C(t)^\top v(t)$. Applying Theorem~\ref{thm:lvt_ctrl} to the dual system gives the Gramian rank condition.
\end{proof}

\noindent\textbf{Remark 6:}
Theorem~\ref{thm:lvt_obs} is exact for the linearized LTV model and provides a first-order local observability test for the original nonlinear neural ODE.

%%%%%%%%%%%%%%%%%%%%%%%%%%%%%%%%%%%%%%%%%%%%%%%%%%%%%%%%%%%%%%%%%%%%%%%%%%%%

\subsection{Koopman-Based Observability Analysis}
%%%%%%%%%%%%%%%%%%%%%%%%%%%%%%%%%%%%%%%%%%%%%%%%%%%%%%%%%%%%%%%%%%%%%%%%%%%%

Koopman lifting enables observability analysis in an enlarged coordinate space.

\textbf{Assumption 5:}
The state is lifted using a function $z=\psi(x)\in\mathbb{R}^N$. The output in the lifted space is
$y(t)=C_K z(t)$
for some $C_K\in\mathbb{R}^{p\times N}$ on the region of interest $\mathcal{X}$.

\textbf{Assumption 6:}
$\psi:\mathcal{X}\to\mathbb{R}^N$ is injective, so $x$ is uniquely determined by $z=\psi(x)$ on $\mathcal{X}$.

%\subsubsection{Koopman Observability Gramian}

For the autonomous lifted linear system $\dot z = A_K z$, $y=C_K z$, the Koopman observability Gramian over $[t_0,t]$ is
\begin{equation}\label{eq:WoK}
W_o^{(K)}(t_0,t) =
\int_{t_0}^{t}
e^{A_K^\top(\tau-t_0)}\, C_K^\top C_K \,
e^{A_K(\tau-t_0)} \, d\tau 
\end{equation}

\begin{theorem}[Koopman Observability Implies State Observability on $\mathcal{X}$]
\label{thm:koop_obs}
Suppose Assumptions~5 and 6 hold on $\mathcal{X}$, and consider the autonomous lifted output model $\dot z=A_K z$, $y=C_K z$ (or the time-augmented construction in Remark~2). If $(A_K,C_K)$ is observable on $[t_0,t]$ (equivalently,
$\mathrm{rank}(W_o^{(K)}(t_0,t))=N$), then the original state $x(t_0)\in\mathcal{X}$ is uniquely determined from the output trajectory $y(\cdot)$ on $[t_0,t]$.
\end{theorem}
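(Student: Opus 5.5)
The plan is to chain two injectivity arguments: first show that the output trajectory determines the lifted initial condition $z(t_0)$, then use Assumption~6 to recover $x(t_0)$ from $z(t_0)$.

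\textbf{Step 1: the output map on lifted initial conditions is injective.} Under the autonomous lifted model $\dot z=A_Kz$, $y=C_Kz$, the output is
\[
y(\tau)=C_K e^{A_K(\tau-t_0)}z(t_0),\qquad \tau\in[t_0,t].
\]
Define the linear map $\mathcal{O}:\mathbb{R}^N\to L^2([t_0,t];\mathbb{R}^p)$ by $(\mathcal{O}z_0)(\tau)=C_K e^{A_K(\tau-t_0)}z_0$. Its adjoint is
\[
(\mathcal{O}^\ast w)=\int_{t_0}^{t} e^{A_K^\top(\tau-t_0)}C_K^\top w(\tau)\,d\tau .
\]
This mirrors the construction of $\mathcal{R}^\ast$ in the proof of Theorem~\ref{thm:lvt_ctrl}. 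A direct computation gives $\mathcal{O}^\ast\mathcal{O}=W_o^{(K)}(t_0,t)$ from \eqref{eq:WoK}. Alternatively, one can invoke Theorem~\ref{thm:lvt_obs} with constant $A(t)=A_K$ and $C(t)=C_K$, for which $\Phi(\tau,t_0)=e^{A_K(\tau-t_0)}$.

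Now suppose two lifted initial states $z_a,z_b$ produce the same output on $[t_0,t]$. Then $\mathcal{O}(z_a-z_b)=0$, so
\[
(z_a-z_b)^\top W_o^{(K)}(z_a-z_b)=\|\mathcal{O}(z_a-z_b)\|_{L^2}^2=0 .
\]
Since $\mathrm{rank}(W_o^{(K)})=N$, this matrix is positive definite, hence $z_a=z_b$. Constructively, $z(t_0)=\big(W_o^{(K)}\big)^{-1}\mathcal{O}^\ast y$.

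\textbf{Step 2: pull back to the original state.} Take $x_a,x_b\in\mathcal{X}$ generating the same output $y(\cdot)$. Set $z_a=\psi(x_a)$ and $z_b=\psi(x_b)$. By Assumption~5, the lifted trajectories are exactly the solutions of $\dot z=A_Kz$ with $y=C_Kz$. Step~1 then gives $\psi(x_a)=\psi(x_b)$, and Assumption~6 (injectivity of $\psi$ on $\mathcal{X}$) gives $x_a=x_b$. Hence $x(t_0)$ is uniquely determined by $y(\cdot)$, and $x(t_0)=\psi^{-1}(z(t_0))$ on $\psi(\mathcal{X})$. For the time-augmented case of Remark~2, the same argument applies with $e^{A_K(\tau-t_0)}$ replaced by the state-transition matrix $\Phi_K(\tau,t_0)$.

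\textbf{Main obstacle.} The delicate step is justifying that the lifted trajectory $\psi(x(\tau))$ is exactly a solution of the linear model on all of $[t_0,t]$. This requires two things:
\begin{itemize}
\item the state remains in $\mathcal{X}$ over the whole interval;
\item the closure in Assumption~5 is exact, with no residual $r(t)$ as in Assumption~2.
\end{itemize}
I would state these explicitly as part of the hypotheses: the autonomous model is exact and $\mathcal{X}$ is forward invariant on $[t_0,t]$. With $r\neq 0$ the argument would only yield approximate distinguishability, so the proof would be restricted to the exact-closure setting the theorem presupposes.
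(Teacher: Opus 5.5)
Your proposal is correct and follows the same two-step injectivity argument as the paper: observability of $(A_K,C_K)$ pins down $z(t_0)$ from $y(\cdot)$, and injectivity of $\psi$ on $\mathcal{X}$ then pins down $x(t_0)$. Your version is more complete than the paper's short proof in three ways. You derive $\mathcal{O}^\ast\mathcal{O}=W_o^{(K)}(t_0,t)$ to justify the first step. You attribute injectivity to Assumption~6 (the paper's proof cites Assumption~5). And you state explicitly the exact-closure ($r\equiv 0$) and stay-in-$\mathcal{X}$-on-$[t_0,t]$ hypotheses, which the paper leaves implicit in the phrase ``consider the autonomous lifted output model.''
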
\vspace{-10pt}
\begin{proof}
Observability of $(A_K,C_K)$ implies that $z(t_0)$ is uniquely determined by $y(\cdot)$ on $[t_0,t]$. Since $z(t_0)=\psi(x(t_0))$ and $\psi$ is injective on $\mathcal{X}$ (Assumption~5), there exists a unique $x(t_0)\in\mathcal{X}$ consistent with the recovered $z(t_0)$. Hence $x(t_0)$ is uniquely determined by
$y(\cdot)$ on $[t_0,t]$.
\end{proof}

\noindent\textbf{Remark 7:}
Koopman-based controllability and observability results assume that a finite-dimensional lifted model exists on $\mathcal{X}$. In practice, $(A_K, B_{K,i}, C_K)$ are learned from data, and their accuracy depends on the choice of $\psi(\cdot)$ and the coverage of the training data.

%\rkb{Still need a use case}

%\rkb{What does this exactly mean for the scenario of a system being approximate as Neural ODE in a practical sense: if we think about observability, we want to reconstruct latent space. Then we want to check whether knowing the output can help in determining the initial hidden state, and whether a neural network can learn a mapping so that a particular $u$ can drive the system of some hidden state to some desired target.} 

%%%%%%%%%%%%%%%%%%%%%%%%%%%%%%%%%%%%%%%%%%%%%%%%%%%%%%%%%%%%%%%%%%%%%%%%%%%%

%%%%%%%%%%%%%%%%%%%%%%%%%%%%%%%%%%%%%%%%%%%%%%%%%%%
\begin{figure}[!b]
	\centering
\includegraphics[width=0.4\textwidth]{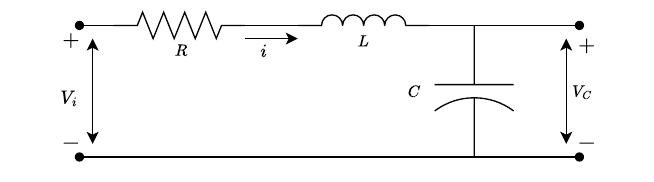}
	\caption{A series RLC circuit where $V_i = 1~\text{V}$ is the input voltage, $V_C$ is the voltage across the capacitor $C = 0.5~\text{F}$, and $i$ is the current flow through resistor $R = 1~\Omega$, inductor $L = 1~\text{H}$, and capacitor $C$.}
	\label{Fig2}
\end{figure}
%%%%%%%%%%%%%%%%%%%%%%%%%%%%%%%%%%%%%%%%%%%%%%%%%%

%---------------------------------------------------------------------------

%%%%%%%%%%%%%%%%%%%%%%%%%%%%%%%%%%%%%%%%%%%%%%%%%%%%%%%%%%%%%%%%%%%%%%%%%%%%
\section{Validation on RLC Circuit}
%%%%%%%%%%%%%%%%%%%%%%%%%%%%%%%%%%%%%%%%%%%%%%%%%%%%%%%%%%%%%%%%%%%%%%%%%%%%

To illustrate the proposed Neural ODE framework, we consider a series RLC circuit as a validation example because its controllability and observability properties are analytically known, as shown in Figure~\ref{Fig2}. This example is used to learn the system dynamics and evaluate empirical controllability and observability. The neural ODE is trained on $1500$ trajectories with $300$ validation samples, where initial conditions are uniformly sampled from $[-1,1]^2$. The time horizon is $t \in [0,10]$ with $400$ discretization points, and the reference trajectories are generated using a fourth-order Runge--Kutta (RK4) solver applied to the series RLC system. The neural ODE uses a fully connected network with three hidden layers of $64$ neurons and \texttt{Tanh} activation functions. The model is trained by minimizing the mean squared error (MSE) between the predicted and true trajectories using the Adam optimizer with a learning rate of $10^{-3}$ for $300$ epochs.

%%%%%%%%%%%%%%%%%%%%%%%%%%%%%%%%%%%%%%%%%%%%%%%%%%%%%%%%%%
\begin{figure}[!t]
    \centering
    \subfloat[]{\includegraphics[width=0.24\textwidth]{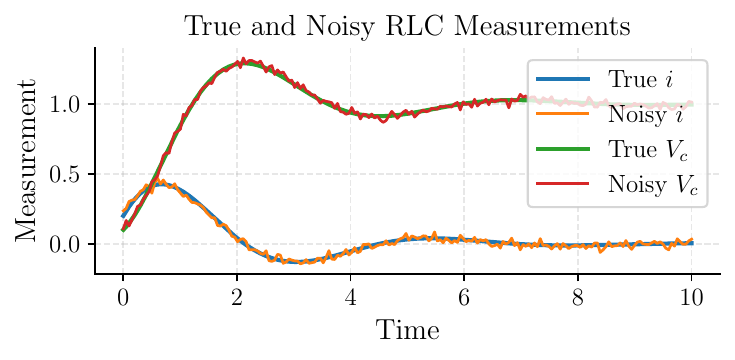}\label{fig:3a}}
    \hfill
    \subfloat[]{\includegraphics[width=0.24\textwidth]{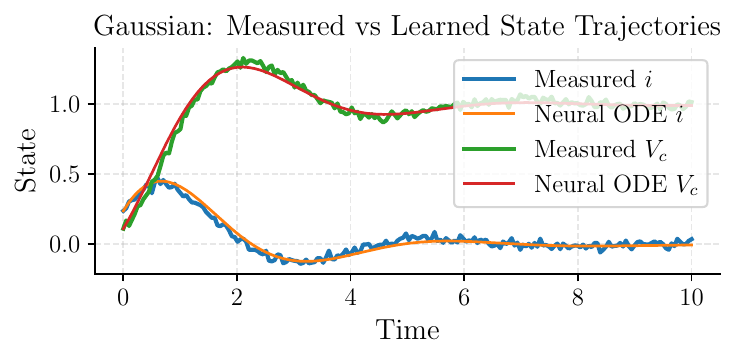}\label{fig:3b}}
    \hfill
    \subfloat[]{\includegraphics[width=0.45\textwidth]{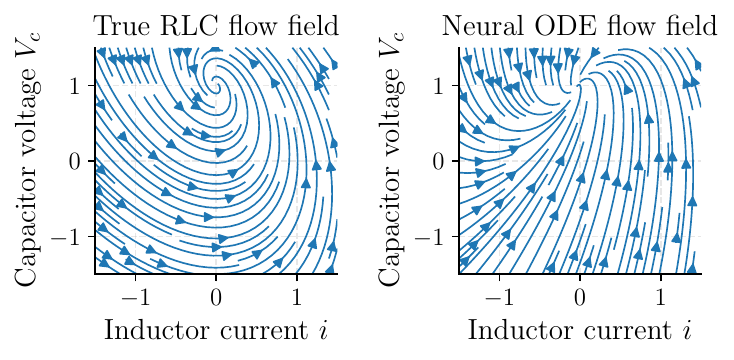}\label{fig:3c}}

\caption{(a) True state trajectories and their Gaussian-noisy counterparts for the series RLC circuit used in training, with noise distributed as $\mathcal{N}(0,\,0.02^2)$; (b) Trajectory fitting of the learned neural ODE for Gaussian-noisy RLC circuit data; (c) Phase-space flow fields of the true RLC system (left) and the learned Neural ODE (right) in the $(i, V_c)$ plane. Flow matching is not explicitly enforced here since the neural ODE is trained to match trajectories \cite{holderrieth2025introduction}.}  
    \label{Fig3}
\end{figure}
%%%%%%%%%%%%%%%%%%%%%%%%%%%%%%%%%%%%%%%%%%%%%%%%%%%%%%%%%%%%%%%
%%%%%%%%%%%%%%%%%%%%%%%%%%%%%%%%%%%%%%%%%%%%%%%%%%%%%%
\begin{figure}[!t]
    \centering
    \subfloat[$W_c^K$ comparison]{%
        \includegraphics[width=0.24\textwidth]
        {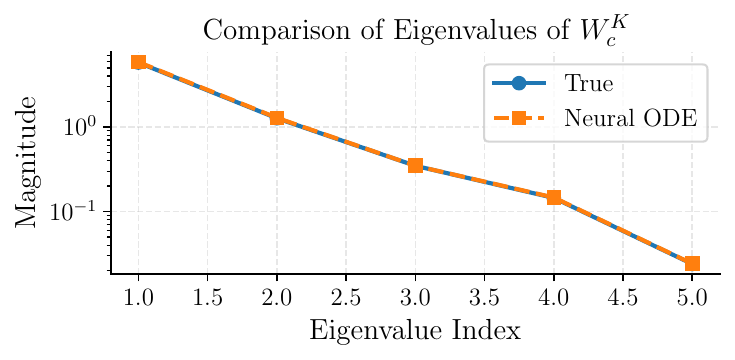}%
        \label{Fig:WcK}}
    \hfill
    \subfloat[$W_o^K$ comparison]{%
        \includegraphics[width=0.24\textwidth]
        {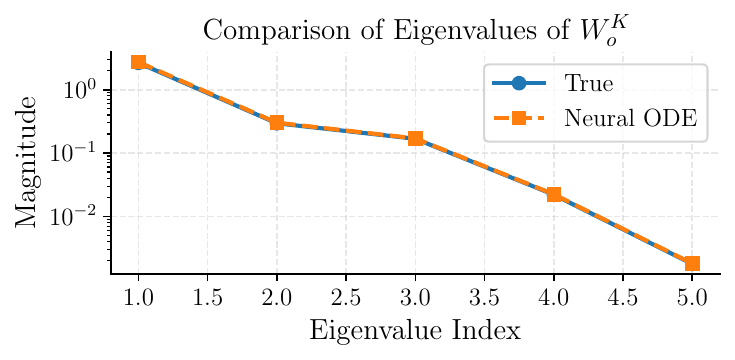}%
        \label{Fig:WoK}}
    \caption{Comparison of the eigenvalues of the Koopman (a) controllability and (b) observability Gramians between the true RLC system and the learned neural ODE. The close agreement suggests that the learned model produces similar lifted controllability and observability characteristics.}
    \label{Fig:KoopmanGramians}
\end{figure}
%%%%%%%%%%%%%%%%%%%%%%%%%%%%%%%%%%%%%%%%%%%%%%%%%%
\begin{figure}[!t]
    \centering
    \subfloat[True system]{%
        \includegraphics[width=0.24\textwidth]
        {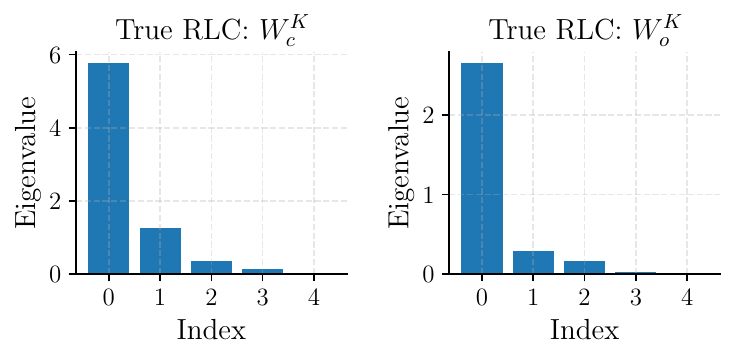}%
        \label{Fig:eig_true}}
    \hfill
    \subfloat[Neural ODE]{%
        \includegraphics[width=0.24\textwidth]
        {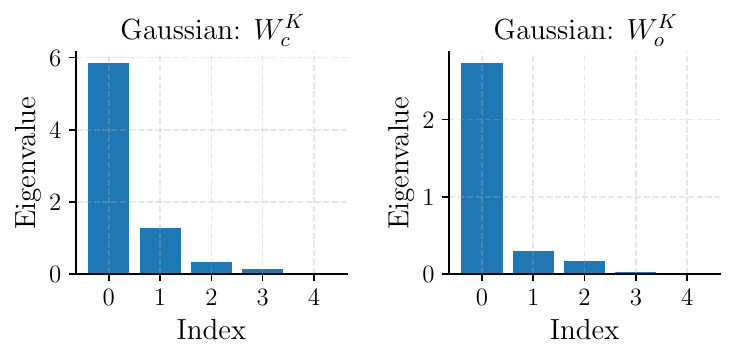}%
        \label{Fig:eig_node}}
    \caption{Eigenvalue spectra of the Koopman controllability $(W_c^K)$ and observability $(W_o^K)$ Gramians for (a) the true RLC system and (b) the learned neural ODE. The eigenvalues are positive and exhibit similar decay patterns, indicating that the learned model captures key controllability and observability characteristics in the lifted space.}
    \label{Fig:KoopmanEigs}
\end{figure}
%==============================================================

The series RLC circuit is described by two first-order differential equations using state variables $x(t) = \begin{bmatrix}  V_c(t) & i (t) \end{bmatrix}^{\top}$ and output variable $y(t)=V_c(t)$ with initial conditions $V_c(0)=V_{c0}$ and $i(0)=i_0$ as follows \cite{alexander2021circuits}:
\begin{equation}
\begin{aligned}\label{eq:23}
\frac{dV_C}{dt} &= \frac{1}{C} i; \qquad
\frac{di}{dt} &= -\frac{1}{L} V_C - \frac{R}{L} i+ \frac{1}{L} V_i(t).
\end{aligned}
\end{equation}
Trajectory data is generated using constant input excitation ($V_i (t) = 1 \text{V}$). Under this condition, the system has a unique equilibrium, so the multi-equilibrium is not discussed here.
Figure~\ref{fig:3a} shows the simulated true trajectories and their Gaussian-noisy measurements used for training. We incorporate Gaussian noise $\mathcal{N}(0,\,0.02^2)$.
The neural ODE is trained with
$\mathcal{L}=\frac{1}{T}\sum_{k=1}^{T} \left\|x_{\text{true}}(t_k)-x_\theta(t_k)\right\|^2$,
where $x_{\text{true}}$ is the true state obtained from simulation of Equation~\eqref{eq:23} with Gaussian noise. The learned neural ODE provides a continuous-time approximation of the RLC dynamics and enables system-theoretic analysis. As shown in Figure~\ref{fig:3b}, the learned neural ODE solution closely follows the true noisy trajectories. The neural ODE produces smoother trajectories by suggesting that the model captures the underlying system dynamics instead of fitting the measurement noise. Based on our example, we find that the neural ODE works as a filter that attenuates the Gaussian noise and provides a solution that is close to the true solution without noise. The vector fields in Figure~\ref{fig:3c} show similar trajectory patterns and directional behavior, resulting in the neural ODE approximating the underlying system dynamics in a qualitative sense. While small differences are present, the overall structure of the flow is consistent.

Koopman-based controllability and observability Gramians are computed for both the true RLC system and the learned neural ODE in the lifted space to validate the proposed framework. A polynomial lifting map $\psi: \mathbb{R}^2\rightarrow \mathbb{R}^5$ is defined as $\psi(x) = \begin{bmatrix}
i & V_c & i^2 & i V_c & V_c^2
\end{bmatrix}^{\top}$ is used to represent the dynamics in a lifted observable space. The resulting Koopman controllability and observability Gramians ($W_c^K$ and $W_o^K$) are computed for both the true system and the learned neural ODE according to Equation~\eqref{eq:WcK} and Equation~\eqref{eq:WoK}, respectively, with finite-difference perturbations with $\epsilon = 0.05$. In this work, we do not explicitly identify Koopman system matrices $(A_K, B_K, C_K)$. Figures~\ref{Fig:WcK} and~\ref{Fig:WoK} show that $W_c^K$ and $W_o^K$ from the neural ODE are consistent with those of the true RLC system. This indicates the learned model preserves the lifted controllability and observability. The eigenvalue spectra in Figures~\ref{Fig:eig_true} and~\ref{Fig:eig_node} are positive in both cases. This is consistent with the full-rank condition in Equation~\eqref{eq:WcK} and Theorem~\ref{thm:koop_obs}; and supports local controllability and observability under the lifting assumption. The eigenvalues of $W_c^K$ for the true system are: $[5.782, 1.269, 0.348, 0.145, 0.024]$; whereas for the neural ODE are: $[5.815, 1.275, 0.347, 0.145, 0.024]$, showing that the neural ODE is consistent with the local controllability of the true RLC system. Similarly, the eigenvalues of $W_o^K$ for the true system are: $[2.659, 0.295, 0.168, 0.022, 0.002]$; whereas for the neural ODE are: $[2.703, 0.299, 0.169, 0.022, 0.002]$, confirming the fidelity of the neural ODE in capturing the observable structure of the underlying RLC system.

%%%%%%%%%%%%%%%%%%%%%%%%%%%%%%%%%%%%%%%%%%%%%%%%%%%%%%%%%%
\begin{figure}[!t]
    \centering
    \subfloat[]{\includegraphics[width=0.24\textwidth]{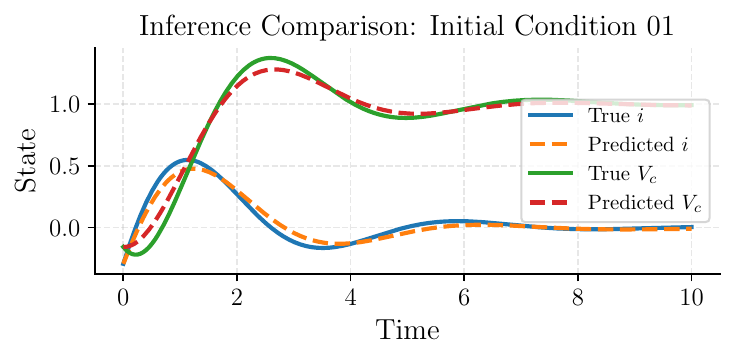}\label{fig6:ic1}}
    \hfill
    \subfloat[]{\includegraphics[width=0.24\textwidth]{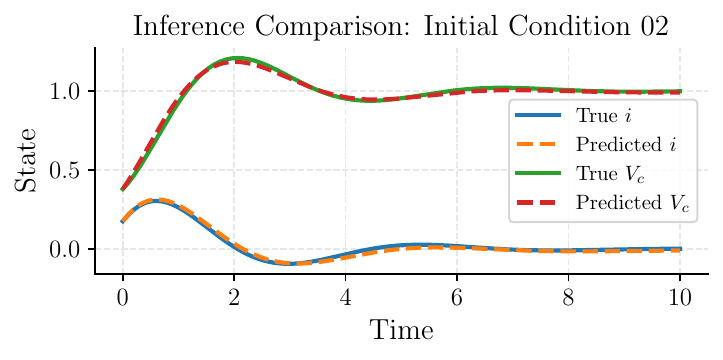}\label{fig6:ic2}}
    \hfill
    \subfloat[]{\includegraphics[width=0.24\textwidth]{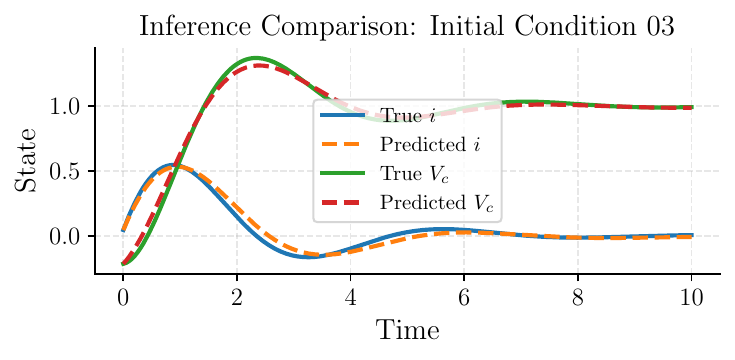}\label{fig6:ic3}}
    \hfill
    \subfloat[]{\includegraphics[width=0.24\textwidth]{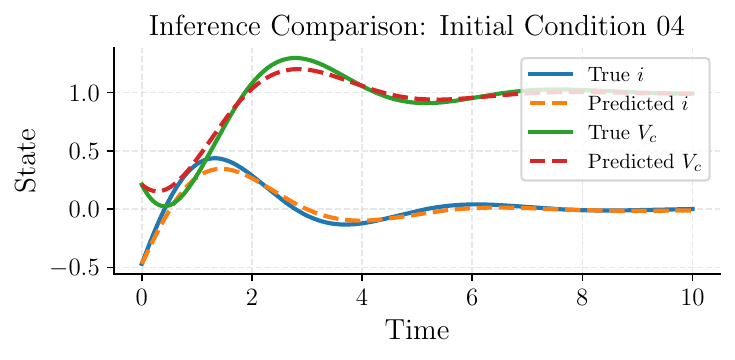}\label{fig6:ic4}}
    \caption{Inference comparison of the Gaussian model for the states $x(t)=\begin{bmatrix} V_c(t) & i(t) \end{bmatrix}^{\top}$ under different initial conditions:(a) $[-0.20,\; 0.50]^\top$, (b) $[0.40,\; -0.30]^\top$,(c) $[0.20,\; 0.80]^\top$, and (d) $[-0.10,\; -0.60]^\top$.}
    \label{Fig6}
\end{figure}
%%%%%%%%%%%%%%%%%%%%%%%%%%%%%%%%%%%%%%%%%
%%%%%%%%%%%%%%%%%%%%%%%%%%%%%%%%%%%%%%%%%
\begin{figure}[!t]
    \centering
    \subfloat[]{\includegraphics[width=0.24\textwidth]{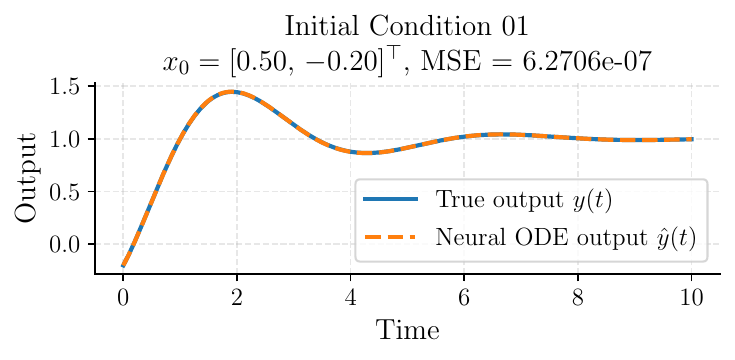}\label{fig:output_ic1}}
    \hfill
    \subfloat[]{\includegraphics[width=0.24\textwidth]{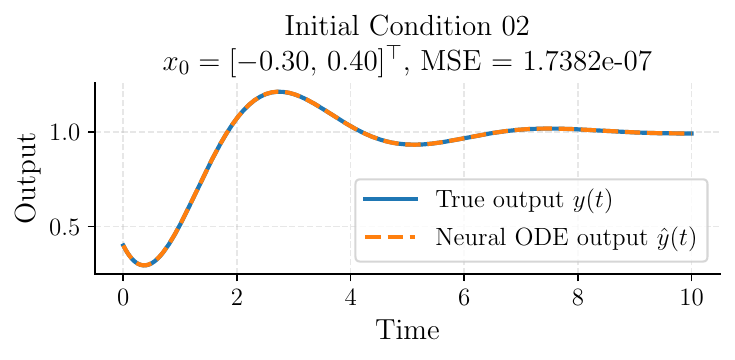}\label{fig:output_ic2}}
    \hfill
    \subfloat[]{\includegraphics[width=0.24\textwidth]{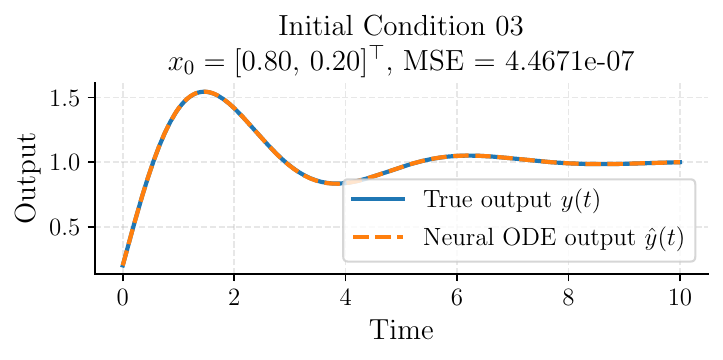}\label{fig:output_ic3}}
    \hfill
    \subfloat[]{\includegraphics[width=0.24\textwidth]{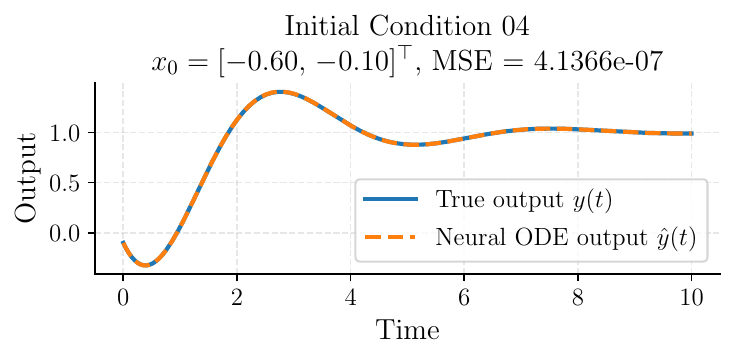}\label{fig:output_ic4}}

    \caption{Comparison of true and neural ODE predicted outputs for four unseen initial conditions: (a) $[0.50,\,-0.20]^\top$, (b) $[-0.30,\,0.40]^\top$, (c) $[0.80,\,0.20]^\top$, and (d) $[-0.60,\,-0.10]^\top$.}
    \label{Fig8}
\end{figure}
%%%%%%%%%%%%%%%%%%%%%%%%%%%%%%%%%%%%%%%%%
The generalization ability of the trained neural ODE model is evaluated by testing on a set of unseen initial conditions. As illustrated in Figures~\ref{fig6:ic1}--\ref{fig6:ic4}, the predicted trajectories nearly follow the true state trajectories across all four tested initial conditions: $[-0.20,\; 0.50]^\top$, $[0.40,\; -0.30]^\top$, $[0.20,\; 0.80]^\top$, and $[-0.10,\; -0.60]^\top$. Hence, the neural ODE generalizes to new operating conditions instead of just fitting the training data. The consistency of tracking performance across diverse initial conditions further indicates the robustness of the neural ODE model to variations in the state space.

At this point, to evaluate output reconstruction, the predicted output, $\hat{y}(t)$ is compared with the true output for multiple initial conditions. As shown in Figures~\ref{fig:output_ic1}--\ref{fig:output_ic4}, the neural ODE output closely follows the true trajectory in all conditions. The MSE values remain small, on the order of $10^{-7}$, indicating nearly accurate reconstruction of the output signal though the trained with Gaussian-noisy data. For example, if we have a close look on Figure \ref{fig:output_ic1}, the first initial condition provides an MSE of $6.2706\times10^{-7}$, while other cases show similar or lower errors. Thus, the predicted outputs nearly match the true trajectories across all initial conditions resulting an accurate output reconstruction and consistent generalization.

The neural ODE understands the system’s behavior, so it can predict new situations correctly and not just repeat what it has seen. It preserves the controllability and observability structures, as shown by the close agreement of the Koopman Gramians with those of the true system. The output trajectories are distinct for different initial conditions and are reconstructed with low mean error, indicating the output retains sufficient information about the system states. The neural ODE also performs well on unseen initial conditions, showing good generalization. Overall, the neural ODE maintains both the system dynamics and input–output behavior. Code, model, and data are available at \url{https://github.com/AARC-lab/CDC2026_NODE}.

%%%%%%%%%%%%%%%%%%%%%%%%%%%%%%%%%%%%%%%%%%%%%%%%%%%%%%%%%%

\section{Conclusion}

In this paper, we work on neural ODEs from a control-theoretic perspective incorporating controllability and observability properties. The proposed framework is evaluated on a series RLC circuit using trajectory data. The proposed approach is system-independent and generalizes to neural ODEs trained on continuous-time dynamical systems. The learned neural ODE reproduces system trajectories under Gaussian measurement noise and shows consistent behavior across unseen initial conditions. Empirical controllability analysis based on trajectory perturbations indicates that the system states can be influenced along multiple directions. Output comparisons across different initial conditions produce distinguishable trajectories, which is consistent with an observability-based interpretation. Therefore, the neural ODE models can capture key dynamical features relevant for control-oriented analysis within a local operating region. These findings are based on trajectory-level analysis and should be interpreted as local properties of the learned dynamics. Moving beyond neural ODEs, in future work, we will explore the use of stochastic differential equations and flow matching to model the evolution of dynamical systems.

%=============================================================
\section*{Author Contributions}

\textbf{Md Saiful Islam:} Conceptualization; methodology; investigation; data curation; software; validation; formal analysis; visualization; and writing--original draft.
\textbf{Rahul Bhadani:} Conceptualization; project administration; resources; supervision; formal analysis; and writing--review \& editing.

\bibliographystyle{IEEEtran}
\bibliography{references}

\end{document}